\documentclass[journal]{IEEEtran}
\usepackage{bbm}
\usepackage{array}
\usepackage{dcolumn}
\usepackage{epsfig}
\usepackage[intlimits]{amsmath}
\usepackage{amsmath, amsfonts, yhmath, bm}
\usepackage{amssymb}
\usepackage{psfrag}
\usepackage{color,soul}
\usepackage[dvipsnames]{xcolor}
\usepackage[normalem]{ulem}
\usepackage{enumerate}
\usepackage{stackengine}
\usepackage[noadjust]{cite}
\usepackage{graphicx}
\usepackage{multirow}
\usepackage[caption=false,font=footnotesize]{subfig}
\usepackage{etoolbox}
\usepackage{tcolorbox}
\usepackage{float}
\usepackage{dsfont}
\usepackage{tikz}
\usetikzlibrary{arrows}
\usetikzlibrary{calc}
\usetikzlibrary{decorations.pathmorphing}
\usetikzlibrary{arrows.meta,positioning,fit}
\usepackage{svg}
\usepackage{comment}
\usepackage{ifthen}
\let\CMmathcal\mathcal      
\usepackage{eucal}         
\let\mathcal\CMmathcal     

\usepackage{hyperref}
\hypersetup{
    colorlinks,
    linkcolor={red!50!black},
    citecolor={blue!50!black},
    urlcolor={blue!80!black}
}

\usepackage{balance}
\usepackage{url}
\usepackage[inline]{enumitem} 

\usetikzlibrary{backgrounds, positioning, fit}
\usepackage{pgfplots}
\pgfplotsset{compat=newest}  

\usepackage{mathtools} 

\usepackage{vmr-symbols-vecbold}
\usepackage{standard-macros}
\PassOptionsToPackage{hyphens}{url}

\newcommand{\defeq}{\mathrel{\mkern-0.25mu=}}
\DeclareSymbolFontAlphabet{\amsmathbb}{AMSb}%

\newcommand{\lro}[1]{\lefto({#1}\right)}																

\newcommand{\lr}[1]{\left({#1}\right)}																
\newcommand{\lrb}[1]{\left \lbrace {#1} \right \rbrace}															

\safemath{\dopplerspread}{B_D}																								
\safemath{\delayspread}{T_D}																									
\safemath{\nc}{n\sub{c}}																										
\safemath{\nf}{n\sub{f}}																										
\safemath{\efa}{p\sub{sc}}
\safemath{\efb}{p\sub{cs}}
\safemath{\ef}{\epsilon\sub{f}	}
\safemath{\nd}{n\sub{d}}																										
\safemath{\ntx}{n\sub{t}} 																											
\safemath{\nrx}{n\sub{r}}																											
\safemath{\ntxt}{\tilde{n\sub{t}}}																											
\safemath{\cb}{\ensuremath{L}} 																								
\safemath{\cl}{\ensuremath{n}} 																								
\safemath{\txanto}{{\ensuremath{\tilde{m}_t}}} 																		
\safemath{\cs}{M} 																														
\safemath{\idPustm}{\ensuremath{S_{k}}}
\safemath{\error}{\ensuremath{\epsilon}} 																				
\safemath{\eexp}{\ensuremath{\mathcal{E}}} 																			
\safemath{\nsubc}{n\sub{s}}			 																						
\safemath{\nofdm}{n\sub{o}} 																									
\safemath{\bc}{\ensuremath{B_c}} 																							
\safemath{\ts}{\ensuremath{T_s}} 																							
\safemath{\nrb}{\ensuremath{n_{rb}}} 																						
\safemath{\rul}{\ensuremath{\rho\sub{ul}}}
\safemath{\rdl}{\ensuremath{\rho\sub{dl}}}

\safemath{\nres}{\ell}
\safemath{\nr}{n\sub{r}}
\safemath{\maxk}{M^*\lr{\nres, \nsubc, \nofdm, \epsilon, \rho}}
\safemath{\Rmax}{R^*}
\safemath{\Emin}{E\sub{b}^*/N_0}
\safemath{\Eminf}{\frac{E\sub{b}^*}{N_0}}
\safemath{\np}{\ensuremath{n\sub{p}}}
\safemath{\ndf}{\ensuremath{\bar{n}\sub{d}}}
\safemath{\npf}{\ensuremath{\bar{n}\sub{p}}}
\safemath{\code}{\ensuremath{\mathcal{C}}}
\safemath{\err}{\ensuremath{\epsilon}}
\safemath{\rp}{\ensuremath{\rho\sub{p}}}
\safemath{\rd}{\ensuremath{\rho\sub{d}}}
\safemath{\cohtime}{\ensuremath{T\sub{c}}}
\safemath{\cohbw}{\ensuremath{B\sub{c}}}
\safemath{\nmax}{\ensuremath{\ell\sub{m}}}
\safemath{\ntot}{\ensuremath{n\sub{tot}}}
\safemath{\nul}{\ensuremath{n\sub{ul}}}
\safemath{\ndl}{\ensuremath{n\sub{dl}}}

\safemath{\yp}{\ensuremath{\randvecy_{\nu}^{(\text{p})}}}
\safemath{\yd}{\ensuremath{\randvecy_{\nu}^{(\text{d})}}}
\safemath{\ypd}{\ensuremath{\vecy_{\nu}^{(\text{p})}}}
\safemath{\ydd}{\ensuremath{\vecy_{\nu}^{(\text{d})}}}

\safemath{\ypf}{\ensuremath{\bar{\randvecy}_{\nu}^{(\text{p})}}}
\safemath{\ydf}{\ensuremath{\bar{\randvecy}_{\nu}^{(\text{d})}}}
\safemath{\ypdf}{\ensuremath{\bar{\vecy}_{\nu}^{(\text{p})}}}
\safemath{\yddf}{\ensuremath{\bar{\vecy}_{\nu}^{(\text{d})}}}

\safemath{\xp}{\ensuremath{\vecx^{(\text{p})}}}
\safemath{\xd}{\ensuremath{\randvecx_{\nu}^{(\text{d})}}}
\safemath{\xdd}{\ensuremath{\vecx_{\nu}^{(\text{d})}}}

\safemath{\xpf}{\ensuremath{\bar{\vecx}^{(\text{p})}}}
\safemath{\xdf}{\ensuremath{\bar{\randvecx}_{\nu}^{(\text{d})}}}
\safemath{\xddf}{\ensuremath{\bar{\vecx}_{\nu}^{(\text{d})}}}

\safemath{\xdb}{\ensuremath{\overline{\randvecx}^{(\text{d})}}}
\safemath{\Pxd}{\ensuremath{P_{\randvecx^{(\text{d})}}}}

\safemath{\xpbar}{\ensuremath{\overline{\matX}^{(\text{p})}}}
\safemath{\xdbar}{\ensuremath{\overline{\randmatX}^{(\text{d})}}}

\safemath{\xdv}{\ensuremath{\randvecx^{(\text{d})}}}
\safemath{\xdbarv}{\ensuremath{\overline{\randvecx}^{(\text{d})}}}
\safemath{\ydv}{\ensuremath{\randvecy^{(\text{d})}}}

\safemath{\xdr}{\ensuremath{\matX^{(\text{d})}}}

\safemath{\ttx}{\ensuremath{\tau\sub{tx}}}
\safemath{\trx}{\ensuremath{\tau\sub{rx}}}
\safemath{\ack}{\ensuremath{\mathrm{s}}}
\safemath{\nack}{\ensuremath{\mathrm{c}}}

\safemath{\mI}{\ensuremath{i\lro{\randvecy ; \randvecx}}} 				

\safemath{\randveca}{\bm{A}}
\safemath{\randvecb}{\bm{B}}
\safemath{\randvecc}{\bm{C}}
\safemath{\randvecd}{\bm{D}}
\safemath{\randvece}{\bm{E}}
\safemath{\randvecf}{\bm{F}}
\safemath{\randvecg}{\bm{G}}
\safemath{\randvech}{\bm{H}}
\safemath{\randveci}{\bm{I}}
\safemath{\randvecj}{\bm{J}}
\safemath{\randveck}{\bm{K}}
\safemath{\randvecl}{\bm{L}}
\safemath{\randvecm}{\bm{M}}
\safemath{\randvecn}{\bm{N}}
\safemath{\randveco}{\bm{O}}
\safemath{\randvecp}{\bm{P}}
\safemath{\randvecq}{\bm{Q}}
\safemath{\randvecr}{\bm{R}}
\safemath{\randvecs}{\bm{S}}
\safemath{\randvect}{\bm{T}}
\safemath{\randvecu}{\bm{U}}
\safemath{\randvecv}{\bm{V}}
\safemath{\randvecw}{\bm{W}}
\safemath{\randvecx}{\bm{X}}
\safemath{\randvecy}{\bm{Y}}
\safemath{\randvecz}{\bm{Z}}
\safemath{\randvecphi}{\bm{\Phi}}

\safemath{\randmatA}{\amsmathbb{A}}
\safemath{\randmatB}{\amsmathbb{B}}
\safemath{\randmatC}{\amsmathbb{C}}
\safemath{\randmatD}{\amsmathbb{D}}
\safemath{\randmatE}{\amsmathbb{E}}
\safemath{\randmatF}{\amsmathbb{F}}
\safemath{\randmatG}{\amsmathbb{G}}
\safemath{\randmatH}{\amsmathbb{H}}
\safemath{\randmatI}{\amsmathbb{I}}
\safemath{\randmatJ}{\amsmathbb{J}}
\safemath{\randmatK}{\amsmathbb{K}}
\safemath{\randmatL}{\amsmathbb{L}}
\safemath{\randmatM}{\amsmathbb{M}}
\safemath{\randmatN}{\amsmathbb{N}}
\safemath{\randmatO}{\amsmathbb{O}}
\safemath{\randmatP}{\amsmathbb{P}}
\safemath{\randmatQ}{\amsmathbb{Q}}
\safemath{\randmatR}{\amsmathbb{R}}
\safemath{\randmatS}{\amsmathbb{S}}
\safemath{\randmatT}{\amsmathbb{T}}
\safemath{\randmatU}{\amsmathbb{U}}
\safemath{\randmatV}{\amsmathbb{V}}
\safemath{\randmatW}{\amsmathbb{W}}
\safemath{\randmatX}{\amsmathbb{X}}
\safemath{\randmatY}{\amsmathbb{Y}}
\safemath{\randmatZ}{\amsmathbb{Z}}
\safemath{\randmatSigma}{\mathbb{\Sigma}}
\safemath{\randmatPhi}{\mathbb{\Phi}}
\safemath{\randmatLambda}{\mathbb{\Lambda}}

\safemath{\matSigma}{\bm{\Sigma}}
\safemath{\matPhi}{\bm{\Phi}}
\safemath{\matLambda}{\bm{\Lambda}}

\newcommand{\ratedist}{\text{\tiny RD}}
\newcommand{\selscheme}{\text{\tiny sel}}

\newcommand{\addra}{}
\newcommand{\delra}[1]{}

\definecolor{mitgrayback}{cmyk}{0.06, 0.05, 0.05, 0.0}
\definecolor{lightblack}{cmyk}{0.3, 0.3, 0.3, 0.25}
\definecolor{lightgray}{cmyk}{0.02, 0, 0, 0.0}

\makeatletter
\newsavebox\myboxA
\newsavebox\myboxB
\newlength\mylenA

\newcommand*\mybar[2][0.75]{%
	\sbox{\myboxA}{$\m@th#2$}%
	\setbox\myboxB\null
	\ht\myboxB=\ht\myboxA%
	\dp\myboxB=\dp\myboxA%
	\wd\myboxB=#1\wd\myboxA
	\sbox\myboxB{$\m@th\overline{\copy\myboxB}$}
	\setlength\mylenA{\the\wd\myboxA}
	\addtolength\mylenA{-\the\wd\myboxB}%
	\ifdim\wd\myboxB<\wd\myboxA%
	\rlap{\hskip 0.5\mylenA\usebox\myboxB}{\usebox\myboxA}%
	\else
	\hskip -0.5\mylenA\rlap{\usebox\myboxA}{\hskip 0.5\mylenA\usebox\myboxB}%
	\fi}
\makeatother

\makeatletter
\renewcommand*{\defeq}{\mathrel{\rlap{%
			\raisebox{0.3ex}{$\m@th\cdot$}}%
		\raisebox{-0.3ex}{$\m@th\cdot$}}%
	=}
\makeatother

\newtheorem{remark}{Remark}
\newtheorem{theorem}{Theorem}

\newtheorem{lemma}{Lemma}
\begin{document}

\title{On the Suboptimality of Rate--Distortion-Optimal Compression: Fundamental Accuracy Limits for Distributed Localization}

\author{Amir~Weiss
\thanks{A. Weiss is with The Alexander Kofkin Faculty of Engineering, Bar-Ilan University, Ramat Gan, 5290002, Israel, e-mail: amir.weiss@biu.ac.il.}}

\maketitle

\begin{abstract}
We derive fundamental accuracy limits for distributed localization when a fusion center \delra{observes}\addra{has access} only \addra{to independently }rate--distortion (RD)-optimally compressed versions of \delra{wideband }multi-sensor \delra{waveforms}\addra{observations, under a line-of-sight propagation model with a Gaussian wideband waveform}. Using the Gaussian RD test-channel model together with a Whittle spectral Fisher-information characterization, we obtain an explicit frequency-domain Cram\'er--Rao lower bound. A two-band, two-level specialization yields closed-form expressions and reveals a rate-induced regime change: RD-optimal compression under a squared-error distortion measure can eliminate localization-informative spectral content. A simple band-selective scheme can outperform RD compression by orders of magnitude at the same rate, motivating localization-aware compression for networked sensing and integrated sensing and communication systems.
\end{abstract}

\vspace{-0.1cm}
\begin{IEEEkeywords}
Distributed localization, rate--distortion theory, task-oriented compression, Cram\'er--Rao lower bound.
\end{IEEEkeywords}

\IEEEpeerreviewmaketitle
\vspace{-0.45cm}
\section{Introduction}\label{sec:introduction}
The proliferation of distributed sensing and communication platforms---including Internet-of-Things deployments~\cite{da2014internet}, \delra{battery-powered }sensor networks~\cite{wu2023low}, wearable devices~\cite{sztyler2016body}, and edge-assisted cyber--physical systems~\cite{lu2022edge}---has accelerated the need to operate under stringent resource constraints. In such settings, raw waveform streaming from multiple nodes to a fusion center (FC) is often infeasible due to limited uplink rates, energy budgets, latency requirements, or shared-spectrum constraints. As a result, modern architectures increasingly rely on \emph{compressed observations}~\cite{ketshabetswe2021data}, where each node transmits a rate-constrained representation of its measurements\delra{ (e.g., quantized or otherwise compressed}~\cite{Shlezinger2019TaskQuant}\delra{)}, while inference is performed centrally or cooperatively. Understanding the fundamental impact of compression on estimation fidelity~\cite{zhang1988estimation} is therefore critical for principled system design.

A particularly important inference task is \emph{distributed localization}, which underpins numerous applications such as \delra{asset tracking, }navigation\delra{ in GNSS-denied environments, autonomous driving}, robotics, and industrial monitoring~\cite{trevlakis2023localization}. \delra{Moreover, localization}\addra{It} has\addra{ also} emerged as a key capability in \delra{the context of }integrated sensing and communications (ISAC)~\cite{liu2022integrated,liu2022survey,lu2024integrated}, where \delra{communication waveforms and infrastructures are leveraged to sense the environment and localize targets or users. In many ISAC and networked sensing scenarios, localization must be performed from observations collected across }distributed receivers\addra{ collect observations that must often be}\delra{ and} relayed over rate-limited links\delra{, making}\addra{. This makes} the interplay between compression and localization accuracy a first-order design consideration.

Despite extensive work on localization and on compression for estimation in relative isolation (e.g.,~\cite{berger1996ceo,shen2010fundamental,shen2010fundamentalii,balkan2010crlb,wymeersch2009cooperative}),\addra{ including task-oriented quantization perspectives~\cite{Shlezinger2019TaskQuant}, a Cram\'er--Rao lower bound (CRLB) characterization for time-delay estimation (TDE) in dispersed-spectrum settings~\cite{Gezici2009DispersedSpectrumCRLB}, and CRLB--versus--rate tradeoffs for rate-constrained distributed TDE from quantized observations~\cite{Srinivasan2010RateConstrainedTDE}, }the literature that provides \emph{analytically explicit} accuracy limits for localization \emph{from compressed multi-sensor observations} remains relatively sparse, especially in wideband waveform models where time-of-arrival structure is central. In particular, while rate--distortion (RD) theory~\cite{bergerratedistortion1971} offers a tractable lens for modeling optimal compression \delra{for }under \delra{quadratic fidelity criteria}\addra{a squared-error distortion criterion}, it is unclear how RD-optimal compression for \emph{signal reconstruction} translates into localization performance, and whether compress-then-estimate design is well aligned with the information content most relevant for localization.

This letter addresses the following fundamental question: \emph{What are the accuracy limits of distributed localization when the FC has access only to\addra{ independently} \delra{rate-constrained (}compressed\delra{)} versions of the sensor observations?} Under a line-of-sight wideband Gaussian waveform model, we derive frequency-domain \delra{Cram\'er--Rao lower bounds (}CRLBs\delra{)} for localization from compressed observations by combining a spectral Fisher-information (FI) characterization with a Gaussian test-channel model of compression. We specialize the bound to a two-band, two-level spectral model, yielding closed-form expressions that reveal nontrivial rate-dependent behavior, including regime changes induced by water-filling. Finally, we construct an explicit counterexample showing that compression optimized for mean-square signal reconstruction can be strictly suboptimal for localization accuracy, which motivates localization-aware (task-oriented) compression beyond compress-then-estimate designs.

\vspace{-0.2cm}
\section{Problem Formulation}\label{sec:problem_formulation}
We consider a distributed sensing system with $M\ge 2$ spatially separated sensors.
Let $\vecp_m\in\reals^{d\times 1}$ denote the known position of sensor $m\in\{1,\ldots,M\}$,
and let $\vecp\in\reals^{d\times 1}$ denote the unknown source location, where typically $d\in\{2,3\}$.
Assuming line-of-sight propagation with known speed $c>0$, the propagation delay to sensor $m$ is
\begin{equation}\label{eq:tau_m_loc_def}
    \tau_m(\vecp)\triangleq \frac{\|\vecp-\vecp_m\|_2}{c}\in\positivereals,\quad m\in\{1,\ldots,M\}.
\end{equation}
As will be evident from the second-order statistics below, and as is standard in passive localization, the dependence on $\vecp$ is only through the time-delay differences $\Delta_{m\ell}(\vecp)\triangleq\tau_m(\vecp)-\tau_\ell(\vecp)$. Thus, no ``global time-shift''
parameter is introduced.

Each sensor observes a noisy, time-shifted version of a common wideband source over the interval $t\in[0,T]$,
\begin{equation}\label{eq:ct_obs_model_loc}
    \rndx_m(t)=\rnds\big(t-\tau_m(\vecp)\big)+\rndn_m(t)\in\reals,
    \quad m\in\{1,\ldots,M\},
\end{equation}
where $\rnds(t)$ is a zero-mean wide-sense stationary (WSS) Gaussian process with (two-sided) power spectral density (PSD) function $S_{\rnds}(f)$ supported on
$|f|\le B$ for some known $B>0$.
The noises $\{\rndn_m(t)\}_{m=1}^M$ are mutually independent, independent of $\rnds(t)$,
zero-mean WSS Gaussian processes with PSDs $\{S_{\rndn_m}(f)\}_{m=1}^M$ (supported on $|f|\le B$).

Define the sensor vector process $\rvecx(t)\triangleq\tp{[\rndx_1(t)\ \cdots\ \rndx_M(t)]}$.
Its (matrix-valued) cross-spectral density admits the form
\begin{equation}\label{eq:csd_uncompressed}
    \matS_{\rvecx}(f;\vecp)
    =
    S_{\rnds}(f)\,\vecv(f;\vecp)\herm{\vecv}(f;\vecp)
    + \matS_{\rvecn}(f)\in\complexset^{M\times M},
\end{equation}
where $\matS_{\rvecn}(f)\triangleq \mathrm{diag}\!\big(S_{\rndn_1}(f),\ldots,S_{\rndn_M}(f)\big)$ and
\begin{equation}\label{eq:v_def_loc}
    \vecv(f;\vecp)\triangleq
    \tp{\left[
    e^{-\jmath 2\pi f\tau_1(\vecp)}\,\cdots\,e^{-\jmath 2\pi f\tau_M(\vecp)}
    \right]}\in\complexset^{M\times 1}.
\end{equation}
In particular, for $m\neq \ell$,
\begin{equation}\label{eq:offdiag_tdoa}
    [\matS_{\rvecx}(f;\vecp)]_{m\ell}
    =
    S_{\rnds}(f)\,e^{-\jmath 2\pi f\Delta_{m\ell}(\vecp)}\in\complexset,
\end{equation}
which shows explicitly that $\vecp$ affects the observation law only through time-difference-of-arrival terms.

In the distributed setting with communication constraints, each sensor $m$ communicates to a FC over a rate-limited link of rate $R_m$ bits per second. Thus, the FC does not have direct access to $\rndx_m(\cdot)$, but rather to a compressed version thereof, denoted as $\widehat{\rndx}_m(\cdot)$, produced at sensor $m$ under the rate constraint $R_m$. Let $\widehat{\rvecx}(t)\triangleq\tp{[\widehat{\rndx}_1(t)\ \cdots\ \widehat{\rndx}_M(t)]}$
denote the vector process of the compressed signals available at the FC.

Based on $\widehat{\rvecx}(\cdot)$ over $[0,T]$, the FC constructs an estimator $\widehat{\vecp}$ of the unknown
location $\vecp$. The focus of this letter is to characterize the fundamental localization limits from RD-optimally compressed
observations by deriving the Cram\'er--Rao lower bound (CRLB) on the mean-square error of any unbiased estimator of $\widehat{\vecp}$ based on $\widehat{\rvecx}(\cdot)$ over $[0,T]$. In addition, and perhaps surprisingly, we will show that it is easy to design a strictly RD-suboptimal compression giving significantly higher localization accuracy, thus highlighting the need for developing joint compression-localization schemes.

\vspace{-0.2cm}
\section{CRLB for RD-Optimally Compressed Signals}\label{sec:crlb_rd}

Under the Gaussian signal model in Section~\ref{sec:problem_formulation} and the squared-error distortion measure,
the RD optimal compression of a WSS Gaussian process admits a convenient test-channel representation
(e.g.,~\cite{bergerratedistortion1971}). In particular, the $m$-th compressed waveform
available at the FC can be modeled as
\begin{equation}\label{eq:test_channel_loc_time}
    \widehat{\rndx}_m(t)=\left(b_m * \rndx_m\right)(t)+\rndz_m(t)\in\reals,
\end{equation}
where $b_m(t)$ is a linear time-invariant (LTI) filter with (real-valued) frequency response $B_m(f)$, and
$\rndz_m(t)$ is a zero-mean WSS Gaussian ``compression-noise'' process, independent of $\rndx_m(t)$, with PSD
$S_{\rndz_m}(f)$. Denoting by $S_{\rndx_m}(f)$ the PSD of $\rndx_m(t)$, the RD water-filling solution~\cite{cover1999elements} implies that there exists a
water level $\lambda_m\ge 0$ such that
\begin{align}
    B_m(f)
    = \left[1-\frac{\lambda_m}{S_{\rndx_m}(f)}\right]^+, \quad S_{\rndz_m}(f)=\lambda_m\,B_m(f),\label{eq:Bm_and_Sz_explicit}
\end{align}
where $[u]^+\triangleq\max\{u,0\}$. The water level $\lambda_m$ is uniquely determined by the rate constraint $R_m$ via
\begin{equation}\label{eq:rate_constraint_waterfilling}
    R_m =  \frac{1}{2}\int_{-\infty}^{\infty}\left[\log_2\left(\frac{S_{\rndx_m}(f)}{\lambda_m}\right)\right]^+\!{\rm d}f,
\end{equation}
where the specialized integral \eqref{eq:rate_constraint_waterfilling} in our bandlimited signal setting is effectively over $|f|\le B$.

Since each $\widehat{\rndx}_m(t)$ in \eqref{eq:test_channel_loc_time} is obtained from a linear transformation of $\rndx_m(t)$
with an addition of independent WSS Gaussian noise, the compressed vector process $\widehat{\rvecx}(t)$ remains WSS Gaussian.
Consequently, its statistical law is fully characterized by its cross-spectral density matrix
$\matS_{\widehat{\rvecx}}(f;\vecp)$, given by
\begin{equation}\label{eq:csd_compressed}
    \matS_{\widehat{\rvecx}}(f;\vecp)
    =
    S_{\rnds}(f)\,\matB(f)\vecv(f;\vecp)\herm{\vecv}(f;\vecp)\matB(f)+\matSigma(f),
\end{equation}
where we have defined
\begin{equation}\label{eq:Bmat_def}
    \matB(f)\triangleq \mathrm{diag}\!\left(B_1(f),\ldots,B_M(f)\right)\in\positivereals^{M\times M},
\end{equation}
and using $\{S_{\rndw_m}(f)\triangleq|B_m(f)|^2 S_{\rndn_m}(f)+S_{\rndz_m}(f)\}_{m=1}^M$,
\begin{equation}\label{eq:Sigma_def_loc}
    \matSigma(f)\triangleq
    \mathrm{diag}\!\left(S_{\rndw_1}(f),\ldots,S_{\rndw_M}(f)\right)\in\positivereals^{M\times M}.
\end{equation}

\vspace{-0.4cm}
\subsection{CRLB via a spectral FI}
Since $\widehat{\rvecx}(t)$ is WSS Gaussian with a cross-spectral density matrix
$\matS_{\widehat{\rvecx}}(f;\vecp)$, the localization information can be expressed in the frequency domain.
Specifically, under standard regularity conditions for purely non-deterministic stationary Gaussian processes,\footnote{For example, it is sufficient to assume that $\matS_{\widehat{\rvecx}}(f;\vecp)$ is uniformly bounded and positive definite on $[-B,B]$, and is continuously differentiable in $\vecp$.} the FI rate (i.e., per unit time) matrix admits the Whittle spectral representation~\cite{whittle1953analysis}, i.e.,
\begin{align}
    &\lim_{T\to\infty}\frac{1}{T}\left[\matJ^{\ratedist}(\vecp)\right]_{ij}\label{eq:FIMrate_whittle_loc}\\
    &=
    \frac{1}{2}\int_{-\infty}^{\infty}
    \!\tr\!\left(
    \matS_{\widehat{\rvecx}}^{-1}(f;\vecp)
    \matS'_{\widehat{\rvecx},i}(f;\vecp)
    \matS_{\widehat{\rvecx}}^{-1}(f;\vecp)
    \matS'_{\widehat{\rvecx},j}(f;\vecp)
    \right){\rm d}f,\label{eq:FIMrate_whittle_loc2}
\end{align}
where, for all $1\leq i,j\leq d$, $\left[\matJ^{\ratedist}(\vecp)\right]_{ij}$ denotes the $(i,j)$-th element of the FI rate matrix for the RD-optimally compressed signals $\widehat{\rvecx}(\cdot)$, $\matS'_{\widehat{\rvecx},i}(f;\vecp)\triangleq\frac{\partial \matS_{\widehat{\rvecx}}(f;\vecp)}{\partial p_i}$,
and in our bandlimited model the integral reduces to $f\in[-B,B]$.
For an observation horizon $T$, the CRLB is (e.g.,~\cite{van2004detection})
\begin{equation}\label{eq:crlb_loc_T}
    \Covop\left(\widehat{\vecp}-\vecp\right)\triangleq \Exop\left[  \left(\widehat{\vecp}-\vecp\right)\tp{\left(\widehat{\vecp}-\vecp\right)} \right]\succeq {\matJ_T^{\ratedist}}^{-1}(\vecp),
\end{equation}
where $\matJ^{\ratedist}_T(\vecp)$ denotes the FI matrix (FIM) for $\vecp$ based on $\widehat{\rvecx}(\cdot)$ over $[0,T]$. Moreover, $\matJ^{\ratedist}_T(\vecp)$ grows linearly with $T$, and
\begin{equation}\label{eq:Jrate_def}
    \matJ^{\ratedist}_\infty(\vecp)\triangleq \lim_{T\to\infty}\frac{1}{T}\matJ^{\ratedist}_T(\vecp)
\end{equation}
exists and is given by \eqref{eq:FIMrate_whittle_loc}. Thus, $\matJ^{\ratedist}_T(\vecp)=T\,\matJ^{\ratedist}_\infty(\vecp)+o(T)$.

To evaluate \eqref{eq:FIMrate_whittle_loc2}, it remains to compute
$\matS'_{\widehat{\rvecx},i}(f;\vecp)$. Recall \eqref{eq:csd_compressed}, and that $\matSigma(f)$ is independent of $\vecp$. Thus, the dependence in $\vecp$ is only through the phase terms
$e^{-\jmath 2\pi f\tau_m(\vecp)}$ in $\vecv(f;\vecp)$. Differentiating \eqref{eq:tau_m_loc_def} yields, for all $1\leq i\leq d$,
\begin{equation}\label{eq:delay_grad_def}
    \frac{\partial \tau_m(\vecp)}{\partial p_i}
    =
    \frac{1}{c}\,\frac{p_i-[\vecp_m]_i}{\|\vecp-\vecp_m\|_2},
    \quad m\in\{1,\ldots,M\},
\end{equation}
and consequently, for each $m$,
\begin{equation}\label{eq:dv_dp}
    \frac{\partial [\vecv(f;\vecp)]_m}{\partial p_i}
    =
    -\jmath 2\pi f\frac{\partial \tau_m(\vecp)}{\partial p_i}\,[\vecv(f;\vecp)]_m.
\end{equation}
Using the notation $\vecv'_{i}(f;\vecp)\triangleq\frac{\partial \vecv(f;\vecp)}{\partial p_i}$, by the product rule,
\begin{align}\label{eq:dS_dp_general}
    \matS'_{\widehat{\rvecx},i}(f;\vecp)&=S_{\rnds}(f)\,\matB(f)\matQ_i(f;\vecp)
    \matB(f),
\end{align}
where $\matQ_i(f;\vecp)\triangleq\vecv'_{i}(f;\vecp)\herm{\vecv}(f;\vecp)+\vecv(f;\vecp)\herm{\vecv'}_{i}(f;\vecp)$, and which together with \eqref{eq:FIMrate_whittle_loc} yields an explicit CRLB expression as a one-dimensional integral over $f\in[-B,B]$.
We next specialize this expression to a two-band, two-level spectral model to obtain fully closed-form
bounds and to show, with a simple RD-suboptimal compression, that RD-optimal compression for reconstruction can be strictly suboptimal for localization.

\vspace{-0.3cm}
\section{RD Is Not Localization-Optimal: A Two-Band Counterexample}\label{sec:counterexample}
The CRLB expression derived in Section~\ref{sec:crlb_rd} is explicit, but still involves a frequency integral over the
(possibly arbitrary) source PSD $S_{\rnds}(f)$, and the RD water-filling induces rate-dependent changes in the
effective spectrum seen at the FC. It is therefore generally unclear whether RD-optimality for waveform reconstruction transfers to optimality in terms of localization accuracy (or even preserves the most localization-informative spectral components). In this section, we adopt an explicit two-band, two-level spectral model that
(i) yields closed-form expressions for the CRLB; and (ii) already captures the key phenomenon: RD-optimal compression for
signal reconstruction can be strictly suboptimal for localization.

\vspace{-0.3cm}
\subsection{Two-band, two-level spectral model}\label{subsec:two_band_model}
Fix $0<f_{\rm L}<f_{\rm H}\le B$ and define the two disjoint bands
\begin{equation}\label{eq:two_bands_def}
    \setB_{\rm L}\triangleq\{f:|f|\le f_{\rm L}\},
    \qquad
    \setB_{\rm H}\triangleq\{f:f_{\rm L}<|f|\le f_{\rm H}\}.
\end{equation}
We assume the following piecewise-constant source spectrum
\begin{equation}\label{eq:two_level_psd}
    S_{\rnds}(f)=
    \begin{cases}
        S_{\rm L}, & f\in \setB_{\rm L},\\
        S_{\rm H}, & f\in \setB_{\rm H},\\
        0, & |f|>f_{\rm H},
    \end{cases}
\end{equation}
with $S_{\rm L}>S_{\rm H}>0$. For simplicity, we also assume identical sensor noises with flat PSD in the relevant band,
\begin{equation}\label{eq:flat_noise_psd_counter}
    S_{\rndn_m}(f)=
    \begin{cases}
        N_0, & |f|\le f_{\rm H},\\
        0, & |f|>f_{\rm H},
    \end{cases}
    \quad \forall m\in\{1,\ldots,M\},
\end{equation}
so that $S_{\rndx_m}(f)=S_{\rnds}(f)+N_0$ for $|f|\le f_{\rm H}$.

Under \eqref{eq:two_level_psd}--\eqref{eq:flat_noise_psd_counter}, the RD water-filling solution
\eqref{eq:Bm_and_Sz_explicit} is also piecewise constant over the two bands. In particular, letting
$\lambda_m$ denote the water level at sensor $m$, we have
\begin{equation}\label{eq:Bm_two_band}
    B_m(f)=
    \begin{cases}
        \left[1-\frac{\lambda_m}{S_{\rm L}+N_0}\right]^+, & f\in\setB_{\rm L},\\[4pt]
        \left[1-\frac{\lambda_m}{S_{\rm H}+N_0}\right]^+, & f\in\setB_{\rm H},\\[4pt]
        0, & |f|>f_{\rm H},
    \end{cases}
\end{equation}
and $S_{\rndz_m}(f)=\lambda_m B_m(f)$ for $|f|\le f_{\rm H}$. As $R_m$ decreases, $\lambda_m$ increases and a \emph{regime change} occurs at $\lambda_m=S_{\rm H}+N_0$:
\begin{itemize}[itemsep=1pt,topsep=2pt,leftmargin=*]
    \item \textbf{High compression rates}: when $\lambda_m< S_{\rm H}+N_0$, both bands are active and $B_m(f)>0$ on $\setB_{\rm L}\cup\setB_{\rm H}$.
    \item \textbf{Intermediate compression rates}: when $S_{\rm H}+N_0\le \lambda_m < S_{\rm L}+N_0$, then $B_m(f)>0$ on $\setB_{\rm L}$ but $B_m(f)=0$ on $\setB_{\rm H}$. Critically, the RD solution \emph{drops the high-frequency band}.
    \item \textbf{Low compression rates}: when $\lambda_m\ge S_{\rm L}+N_0$, $B_m(f)\equiv 0$ and no information is conveyed.
\end{itemize}
Thus, in this two-band two-level spectral model, RD-optimal compression can eliminate an entire band as the rate decreases.

However, recall from \eqref{eq:dv_dp} that the geometry dependence enters through the phase terms
$e^{-\jmath 2\pi f\tau_m(\vecp)}$, and differentiation introduces a factor of $f$.
Consequently, the integrand in \eqref{eq:FIMrate_whittle_loc2} contains an intrinsic \emph{frequency-squared} weighting:
each derivative matrix $\matS'_{\widehat{\rvecx},i}(f;\vecp)$ is proportional to $f$, hence the trace term
$\tr\!\big(\matS_{\widehat{\rvecx}}^{-1}\matS'_{\widehat{\rvecx},i}\matS_{\widehat{\rvecx}}^{-1}\matS'_{\widehat{\rvecx},j}\big)$ scales as $f^2$ (up to other spectral factors). Intuitively, high-frequency components can therefore be substantially more informative for localization, even when they carry less signal energy.

\vspace{-0.5cm}
\subsection{Closed-form CRLB under the two-band model}\label{subsec:crlb_two_band_closed}
We now specialize \eqref{eq:FIMrate_whittle_loc2} under \eqref{eq:two_level_psd}--\eqref{eq:Bm_two_band} and derive a closed-form expression. For clarity of exposition (and since the phenomenon is per-sensor), we assume the symmetric setting
\begin{equation}\label{eq:symmetric_rates}
    R_m=R \; \Rightarrow \begin{cases}
        \lambda_m=\lambda,\\
        B_m(f)=B(f),
    \end{cases} \forall m\in\{1,\ldots,M\}.
\end{equation}

For $b\in\{{\rm L},{\rm H}\}$, define the per-band effective SNR level
\begin{equation}\label{eq:gamma_b_def_thm}
    \gamma_b(R)\triangleq \frac{S_b\,B_b^2(R)}{S_{\rndw,b}(R)},
    \quad
    S_{\rndw,b}(R)\triangleq B_b^2(R)N_0+\lambda(R)\,B_b(R),
\end{equation}
where $S_{\rm L},S_{\rm H}$ are given in \eqref{eq:two_level_psd}, $\lambda(R)$ is the RD water level satisfying \eqref{eq:rate_constraint_waterfilling}, and $B_b(R)\triangleq B(f)$ for any $f\in\setB_b$ (which is constant over each band under \eqref{eq:Bm_two_band}).
Define the band information weights,
\begin{equation}\label{eq:band_weights_def_thm}
    w_b(R)\triangleq \frac{2M\gamma_b(R)^2}{1+M\gamma_b(R)},
    \qquad b\in\{{\rm L},{\rm H}\},
\end{equation}
and with it the lower and higher frequency information terms,
\begin{equation}\label{eq:lowandhighfreqinfo}
    J_{\rm L}(R)\triangleq f_{\rm L}^3w_{\rm L}(R), \quad J_{\rm H}(R)\triangleq \left(f_{\rm H}^3-f_{\rm L}^3\right)w_{\rm H}(R).
\end{equation}
Finally, define (entrywise) the geometry matrix $\matG(\vecp)\in\reals^{d\times d}$,
\begin{equation}\label{eq:G_def_thm}
    [\matG(\vecp)]_{ij}\triangleq \tp{\widetilde{\vecg}}_i(\vecp)\widetilde{\vecg}_j(\vecp),
    \qquad
    \widetilde{\vecg}_i(\vecp)\triangleq \matP\vecg_i(\vecp),
\end{equation}
where $\matP\triangleq \matI_M-\frac{1}{M}\vecone\tp{\vecone}\in\reals^{M\times M}$ is a projection matrix and
\begin{align}
    \vecg_i(\vecp)
    &\triangleq
    \tp{\left[
    \frac{\partial \tau_1(\vecp)}{\partial p_i}\ \cdots\
    \frac{\partial \tau_M(\vecp)}{\partial p_i}
    \right]}\in\reals^{M\times 1}.\label{eq:gi_app_def1_new}
\end{align}

We are now ready to state our main result.
\begin{theorem}[Closed-form FI rate and CRLB under the two-band model]\label{thm:crlb_two_band_closed}
Under the two-band, two-level spectrum model \eqref{eq:two_level_psd}--\eqref{eq:flat_noise_psd_counter}, RD-optimal compression \eqref{eq:Bm_two_band} and the symmetric setting \eqref{eq:symmetric_rates}, the FI rate for $\vecp$ based on $\widehat{\rvecx}(\cdot)$ over $[0,T]$ reads
\begin{equation}\label{eq:Jinf_two_band}
    \matJ^{\ratedist}_\infty(\vecp)
    =
    \frac{4\pi^2}{3}
    \left(J_{\rm L}(R)+J_{\rm H}(R)\right)\matG(\vecp).
\end{equation}
Consequently, for any unbiased estimator $\widehat{\vecp}$ of $\vecp$ based on $\widehat{\rvecx}(\cdot)$ over $[0,T]$, it holds that
\begin{align}
    \Covop\!\left(\widehat{\vecp}-\vecp\right)&\succeq {\matJ^{\ratedist}}_T^{-1}(\vecp)
    =
    \frac{1}{T}\,{\matJ^{\ratedist}}_\infty^{-1}(\vecp)+o\!\left(\frac{1}{T}\right)\\
    &=\frac{3}{4\pi^2 T}
    \frac{1}{J_{\rm L}(R)+J_{\rm H}(R)}
    \,\matG^{-1}(\vecp)+o\!\left(\frac{1}{T}\right),\label{eq:crlb_two_band}
\end{align}
whenever $\matG(\vecp)$ is nonsingular.
\end{theorem}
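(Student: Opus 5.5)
The plan is to evaluate the Whittle integral \eqref{eq:FIMrate_whittle_loc2} frequency by frequency. Under \eqref{eq:two_level_psd}--\eqref{eq:symmetric_rates}, everything is constant on each band, so the integrand reduces to a closed-form trace of structured $M\times M$ matrices times $f^2$. For fixed $f\in\setB_b$ with $B_b(R)>0$, the symmetric setting gives $\matB(f)=B_b\matI_M$ and $\matSigma(f)=S_{\rndw,b}\matI_M$. Writing $\sigma^2\triangleq S_{\rndw,b}$, $s\triangleq S_bB_b^2$ and $\vecv\equiv\vecv(f;\vecp)$, \eqref{eq:csd_compressed} becomes
\begin{equation*}
\matS_{\widehat{\rvecx}}=\sigma^2\matI_M+s\,\vecv\herm{\vecv}, \qquad \herm{\vecv}\vecv=M .
\end{equation*}
By Sherman--Morrison,
\begin{equation*}
\matS_{\widehat{\rvecx}}^{-1}=\sigma^{-2}\Bigl(\matI_M-\tfrac{\gamma_b}{1+M\gamma_b}\vecv\herm{\vecv}\Bigr), \qquad \gamma_b=s/\sigma^2 .
\end{equation*}
From \eqref{eq:dv_dp}, $\vecv'_i=-\jmath2\pi f\,\matD_i\vecv$ with $\matD_i\triangleq\mathrm{diag}(\vecg_i)$, so by \eqref{eq:dS_dp_general}
\begin{equation*}
\matS'_{\widehat{\rvecx},i}=\jmath 2\pi f\,s\,\bigl(\vecv\herm{\vecv}\matD_i-\matD_i\vecv\herm{\vecv}\bigr).
\end{equation*}

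Next I expand $\tr(\matS^{-1}_{\widehat{\rvecx}}\matS'_{\widehat{\rvecx},i}\matS^{-1}_{\widehat{\rvecx}}\matS'_{\widehat{\rvecx},j})$. Every term reduces to products of the scalars
\begin{equation*}
\herm{\vecv}\vecv=M,\qquad \herm{\vecv}\matD_i\vecv=\tp{\vecone}\vecg_i,\qquad \herm{\vecv}\matD_i\matD_j\vecv=\tp{\vecg_i}\vecg_j ,
\end{equation*}
because the unit-modulus phases cancel. This is why the result is independent of $\vecp$ except through $\vecg_i$. I expect the pieces to collect into the centered form
\begin{equation*}
\tp{\vecg_i}\vecg_j-\tfrac1M(\tp{\vecone}\vecg_i)(\tp{\vecone}\vecg_j)=\tp{\widetilde{\vecg}_i}\widetilde{\vecg}_j=[\matG(\vecp)]_{ij},
\end{equation*}
reflecting that only TDOAs matter and that adding a common shift to all delays leaves $\matS_{\widehat{\rvecx}}$ unchanged. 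The scalar prefactor should come out as $4\pi^2f^2\cdot\frac{2M\gamma_b^2}{1+M\gamma_b}=4\pi^2f^2w_b(R)$. Pinning down this coefficient is the main obstacle: the expansion has several cross terms, the Sherman--Morrison correction enters twice, and the factor $2$ together with the $M/(1+M\gamma_b)$ structure must be tracked carefully. To organize it, I will first project onto $\vecv$ and its orthogonal complement, equivalently factor out the phases via $\vecv=\matD_{\vecv}\vecone$. This turns the problem into the same computation with $\vecv$ replaced by $\vecone$, where the algebra is transparent. As a check, I will verify the result in the case $M=2$, $d=1$.

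Integrating then gives the closed form. Since $\int_{\setB_{\rm L}}f^2\,{\rm d}f=\tfrac23 f_{\rm L}^3$ and $\int_{\setB_{\rm H}}f^2\,{\rm d}f=\tfrac23(f_{\rm H}^3-f_{\rm L}^3)$, the factor $\tfrac12$ in \eqref{eq:FIMrate_whittle_loc2} yields $\tfrac12\cdot\tfrac23\cdot4\pi^2\,(J_{\rm L}+J_{\rm H})\,\matG=\tfrac{4\pi^2}{3}(J_{\rm L}+J_{\rm H})\matG$, which is \eqref{eq:Jinf_two_band}. A band with $B_b(R)=0$ (intermediate or low rates) needs separate handling, because there $\matS_{\widehat{\rvecx}}$ vanishes and the Whittle regularity footnote fails. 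I will argue directly that the compressed process has no component in that band, so the band contributes neither signal nor $\vecp$-dependence and the integral is taken over active frequencies only. This matches the convention $\gamma_b=0$, hence $w_b=0$. Finally, the CRLB \eqref{eq:crlb_two_band} follows by inserting $\matJ^{\ratedist}_T=T\matJ^{\ratedist}_\infty+o(T)$ into \eqref{eq:crlb_loc_T}. When $\matG(\vecp)$ is nonsingular and $J_{\rm L}+J_{\rm H}>0$, matrix inversion is continuous, so $\matJ_T^{-1}=T^{-1}\matJ_\infty^{-1}+o(1/T)$.
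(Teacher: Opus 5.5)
Your plan is correct and follows essentially the same route as the paper. It applies Sherman--Morrison to the identity-plus-rank-one band CSD, uses the derivative $\jmath 2\pi f S_bB_b^2(\vecv\herm{\vecv}\matD_i-\matD_i\vecv\herm{\vecv})$ and a per-frequency trace identity, then integrates $f^2$ over the two bands and inverts $T\matJ^{\ratedist}_\infty+o(T)$. The coefficient you leave as ``expected'' follows directly from the orthogonal decomposition you propose, which is exactly the paper's key lemma: with $\vecu_i=(\matD_i-\bar g_i\matI_M)\vecv\perp\vecv$, the matrix $\matA_i=\matD_i\vecv\herm{\vecv}-\vecv\herm{\vecv}\matD_i$ equals $\vecu_i\herm{\vecv}-\vecv\herm{\vecu}_i$, conjugating it by $\matI_M-\tfrac{\gamma_b}{1+M\gamma_b}\vecv\herm{\vecv}$ just scales it by $1/(1+M\gamma_b)$, and $\tr(\matA_i\matA_j)=-2M\tp{\widetilde{\vecg}}_i\widetilde{\vecg}_j$. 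Your separate handling of inactive bands, where $\gamma_b$ is formally $0/0$, and your explicit requirement $J_{\rm L}+J_{\rm H}>0$ are useful points that the paper glosses over.
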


\begin{IEEEproof}
See Appendix~\ref{app:geometry_derivation}.
\end{IEEEproof}
\begin{remark}
Note that the dependence on $\vecp$ is entirely through the geometry-dependent centered delay gradients $\widetilde{\vecg}_i=\matP\vecg_i$.
\end{remark}

Under \eqref{eq:two_level_psd}--\eqref{eq:flat_noise_psd_counter}, the PSD of $\rndx_m(\cdot)$ is $S_{\rndx_m}(f)=S_{\rnds}(f)+N_0$ for $|f|\le f_{\rm H}$, hence the rate constraint \eqref{eq:rate_constraint_waterfilling} becomes
\begin{equation}\label{eq:R_lambda_two_band}
    R
    =
    f_{\rm L}\left[\log_2\!\left(\tfrac{S_{\rm L}+N_0}{\lambda}\right)\right]^+
    +(f_{\rm H}-f_{\rm L})\left[\log_2\!\left(\tfrac{S_{\rm H}+N_0}{\lambda}\right)\right]^+\!.
\end{equation}
Define the \emph{critical rate} (corresponding to $\lambda=S_{\rm H}+N_0$),
\begin{equation}\label{eq:Rcrit_two_band}
    R_{\rm crit}\triangleq f_{\rm L}\log_2\!\left(\frac{S_{\rm L}+N_0}{S_{\rm H}+N_0}\right).
\end{equation}
Then $\lambda(R)$ is explicit:
\begin{itemize}[itemsep=1pt,topsep=2pt,leftmargin=*]
\item \textbf{High-rates} ($R>R_{\rm crit}$): both bands are active ($\lambda<S_{\rm H}+N_0$) and
\begin{equation}\label{eq:lambda_high_two_band}
    \lambda(R)=
    2^{-\frac{R}{f_{\rm H}}}\,
    (S_{\rm L}+N_0)^{\frac{f_{\rm L}}{f_{\rm H}}}\,
    (S_{\rm H}+N_0)^{\frac{f_{\rm H}-f_{\rm L}}{f_{\rm H}}}.
\end{equation}
Thus $B_b(R)=1-\lambda(R)/(S_b+N_0)$ for $b\in\{{\rm L},{\rm H}\}$, and $w_b(R)$ follows from
\eqref{eq:gamma_b_def_thm}--\eqref{eq:band_weights_def_thm}.
\item \textbf{Intermediate-rates} ($0<R\le R_{\rm crit}$): only $\setB_{\rm L}$ is active ($S_{\rm H}+N_0\le\lambda<S_{\rm L}+N_0$) and
\begin{align}
    \lambda(R)&=(S_{\rm L}+N_0)\,2^{-R/f_{\rm L}},\\
    B_{\rm L}(R)&=1-2^{-R/f_{\rm L}},
    \quad
    B_{\rm H}(R)=0,\label{eq:lambda_mid_two_band}
\end{align}
so that $w_{\rm H}(R)=0$ and accordingly $J_{\rm H}(R)=0$.
\end{itemize}

Along with \eqref{eq:Rcrit_two_band}--\eqref{eq:lambda_mid_two_band}, \eqref{eq:crlb_two_band} provides an explicit closed-form CRLB parameterized by the physically meaningful quantities $(S_{\rm L},S_{\rm H},N_0,f_{\rm L},f_{\rm H},M,R,T,\{\vecp_m\},c)$. Specifically, when $R$ drops below $R_{\rm crit}$, RD-optimal compression for signal reconstruction eliminates $\setB_{\rm H}$ (hence $w_{\rm H}(R)=0$), even though the Fisher integrand scales as $f^2$, indicating that higher-frequency components can carry more localization-related information. This mechanism is generally localization-suboptimal and can lead to a catastrophic degradation in localization accuracy.

To see this more clearly, consider the following example. Fix a rate $R\le R_{\rm crit}$, for which RD-optimal compression satisfies $B_{\rm H}(R)=0$ and hence $J_{\rm H}(R)=0$.
Now, consider instead the following (RD-suboptimal) \emph{band-selective} compression scheme:
each sensor suppresses $\setB_{\rm L}$ and applies the Gaussian test channel only on $\setB_{\rm H}$, i.e.,
\begin{equation}\label{eq:band_selective_B}
    B^{\selscheme}(f)=
    \begin{cases}
        0, & f\in\setB_{\rm L},\\
        \left[1-\frac{\lambda_{\selscheme}}{S_{\rm H}+N_0}\right]^+, & f\in\setB_{\rm H},
    \end{cases}
\end{equation}
such that $S_{\rndz}^{\selscheme}(f)=\lambda_{\selscheme} B^{\selscheme}(f)\ \text{ on }\ \setB_{\rm H}$ and zero otherwise.
Imposing the \emph{same rate} $R$ over bandwidth $(f_{\rm H}-f_{\rm L})$ gives
\begin{align}\label{eq:lambda_sel}
    R&=(f_{\rm H}-f_{\rm L})\left[\log_2\!\left(\frac{S_{\rm H}+N_0}{\lambda_{\selscheme}}\right)\right]^+,\\
    \lambda_{\selscheme}(R)&=(S_{\rm H}+N_0)\,2^{-R/(f_{\rm H}-f_{\rm L})}.
\end{align}
Hence on $\setB_{\rm H}$,
\begin{equation}\label{eq:Bsel_H}
    B_{\rm H}^{\selscheme}(R),
    \qquad
    B_{\rm L}^{\selscheme}(R)=0.
\end{equation}

Define $\gamma_{\rm H}^{\selscheme}(R)$ and $w_{\rm H}^{\selscheme}(R)$ by the same formulas as in
\eqref{eq:gamma_b_def_thm}--\eqref{eq:band_weights_def_thm}, but with $(B_b,\lambda)$ replaced by $(B_b^{\selscheme},\lambda_{\selscheme})$.
Then, the FI rate under the band-selective scheme satisfies
\begin{equation}\label{eq:Jinf_sel}
    \matJ_{\infty}^{\selscheme}(\vecp)
    =
    \frac{4\pi^2}{3}\,J_{\rm H}^{\selscheme}(R)\,\matG(\vecp),
    \quad
    J_{\rm H}^{\selscheme}(R)\triangleq (f_{\rm H}^3-f_{\rm L}^3)\,w_{\rm H}^{\selscheme}(R),
\end{equation}
while under RD (for $R\le R_{\rm crit}$) we have $J_{\rm H}^{\ratedist}(R)=0$, hence
\begin{equation}\label{eq:Jinf_RD_lowrate}
    \matJ_{\infty}^{\ratedist}(\vecp)
    =
    \frac{4\pi^2}{3}\,J_{\rm L}^{\ratedist}(R)\,\matG(\vecp),
    \quad
    J_{\rm L}^{\ratedist}(R)=f_{\rm L}^3\,w_{\rm L}(R).
\end{equation}
If $J_{\rm H}^{\selscheme}(R)\!>\!J_{\rm L}^{\ratedist}(R)$, then $\matJ_{\infty}^{\selscheme}(\vecp)\!\succ\! \matJ_{\infty}^{\ratedist}(\vecp)$ and, consequently,
\begin{equation}\label{eq:crlb_improvement_diag}
    \frac{\big[\matJ_T^{-1}(\vecp)\big]_{ii}^{\selscheme}}{\big[\matJ_T^{-1}(\vecp)\big]_{ii}^{\ratedist}}
    =
    \frac{J_{\rm L}^{\ratedist}(R)}{J_{\rm H}^{\selscheme}(R)}<1,\quad 1\leq i\leq d,
\end{equation}
i.e., the CRLB for the band-selective scheme is strictly smaller. 

As a more concrete example, consider a mmWave/ISAC-like wideband regime~\cite{gao2022integrated,guo2025integrated} with $M=4$ sensors, wherein $f_{\rm H}=200~{\rm MHz}$, and further fix $f_{\rm L}=5~{\rm MHz}$, $S_{\rm L}=100$, $S_{\rm H}=20$, and $N_0=1$. For these values, $R_{\rm crit}\approx 11.3~{\rm Mb/s}$. For a rate $R=10~{\rm Mb/s}<R_{\rm crit}$ (a plausible per-sensor backhaul budget), using the expressions from Theorem~\ref{thm:crlb_two_band_closed}, we obtain in this setting
\begin{equation}\label{eq:num_example_ratio_select_realistic}
    \frac{\big[\matJ_T^{-1}(\vecp)\big]_{ii}^{\selscheme}}{\big[\matJ_T^{-1}(\vecp)\big]_{ii}^{\ratedist}}
    =
    \frac{J_{\rm L}^{\ratedist}(R)}{J_{\rm H}^{\selscheme}(R)}
    \approx 9.97\times10^{-3} ,\quad 1\le i\le d,
\end{equation}
i.e., at the \emph{same} rate $R$ the band-selective scheme yields a reduction of two orders of magnitude in the CRLB.

\vspace{-0.1cm}
\section{Discussion and Outlook}\label{sec:conclusion}
This letter derived fundamental localization limits when a FC has access only to rate-constrained
(compressed) versions of wideband multi-sensor observations. Leveraging a Gaussian line-of-sight waveform model and the
Gaussian RD test-channel representation, we obtained an explicit frequency-domain characterization of the Fisher
information and the associated CRLB. Specializing further to a simple, yet insightful two-band, two-level spectrum model yielded closed-form
expressions that transparently expose the interaction between rate allocation, spectral content,
and localization accuracy.

Beyond providing an analytically tractable CRLB, the two-band specialization highlights a key conceptual message:\addra{ under
the Gaussian per-sensor RD benchmark,} compression optimized for waveform reconstruction can be poorly aligned with localization. In particular, under RD-optimal compression, decreasing the rate can induce a sharp regime change in which high-frequency components are eliminated when the compression rate is sufficiently low, despite the fact that localization information is inherently
weighted towards higher frequencies due to the phase sensitivity.
This reveals an explicit mechanism by which compress-then-estimate designs\addra{ based on reconstruction-optimal per-sensor RD
compression} may incur a dramatic loss in localization accuracy\addra{.}\delra{, thereby motivating}\addra{ This observation, in turn, motivates} task- or goal-oriented compression strategies that preserve the most localization-informative signal features.

Naturally, there are many related important questions that remain to be addressed. While our analysis focused on RD-optimal compression under the quadratic distortion criterion, a central direction is to characterize and design \emph{localization-aware} compression rules, e.g., by formulating
rate-allocation problems that maximize localization fidelity subject to rate constraints. It is also
of\addra{ both theoretical and practical} interest to extend the framework beyond line-of-sight to multipath and cluttered environments, and to integrate the
resulting bounds into ISAC-oriented system design, including bandwidth allocation, and
joint sensing--communication resource management.

\appendices
\setcounter{equation}{0}
\renewcommand{\theequation}{S\arabic{equation}}

\section{Derivation of the CRLB for the two-band model}\label{app:geometry_derivation}

Fix a band $b\in\{{\rm L},{\rm H}\}$ and a frequency $f\in\setB_b$. Under
\eqref{eq:two_level_psd}--\eqref{eq:flat_noise_psd_counter} and the symmetric assumption \eqref{eq:symmetric_rates},
the compressed cross-spectral density \eqref{eq:csd_compressed} reduces on $\setB_b$ to
\begin{align}
    \matS_{\widehat{\rvecx}}(f;\vecp)
    &=
    S_b B_b^2\,\vecv(f;\vecp)\herm{\vecv}(f;\vecp)
    + S_{\rndw,b}\matI_M,\label{eq:Phi_app_1_new}
\end{align}
where $S_b\in\{S_{\rm L},S_{\rm H}\}$ is the source PSD level on $\setB_b$, $B_b\triangleq B(f)$ is constant on $\setB_b$, and recall \eqref{eq:gamma_b_def_thm}, so that
\begin{align}
    \matS_{\widehat{\rvecx}}(f;\vecp)
    =
    S_{\rndw,b}\Big(\matI_M + \gamma_b\,\vecv(f;\vecp)\herm{\vecv}(f;\vecp)\Big).\label{eq:Phi_app_2_new}
\end{align}
Since $\herm{\vecv}(f;\vecp)\vecv(f;\vecp)\!=\!M$, the Sherman--Morrison formula gives
\begin{align}
    \matS_{\widehat{\rvecx}}^{-1}(f;\vecp)
    &=
    \tfrac{1}{S_{\rndw,b}}\matK_b(f;\vecp),\label{eq:Phi_inv_app_new}\\
    \matK_b(f;\vecp)
    &\triangleq
    \matI_M-\tfrac{\gamma_b}{1+M\gamma_b}\,\vecv(f;\vecp)\herm{\vecv}(f;\vecp).\label{eq:Kb_app_def_new}
\end{align}

Next, recall \eqref{eq:delay_grad_def} and \eqref{eq:gi_app_def1_new}, and with it define the diagonal matrix $\matD_i(\vecp)\triangleq \mathrm{diag}\!\left(\vecg_i(\vecp)\right)$. Then, \eqref{eq:dv_dp} becomes
\begin{align}
    \vecv'_i(f;\vecp)=-\jmath 2\pi f\,\matD_i(\vecp)\vecv(f;\vecp).\label{eq:vi_app_new}
\end{align}
From \eqref{eq:symmetric_rates}, $\matB(f)=B_b\matI_M$ on $\setB_b$, hence \eqref{eq:dS_dp_general} gives
\begin{align}
    \matS'_{\widehat{\rvecx},i}(f;\vecp)
    &=
    S_b B_b^2\left(\vecv'_i\herm{\vecv}+\vecv\herm{\vecv'}_i\right)\\
    &=
    \jmath 2\pi f\,S_b B_b^2\left(\vecv\herm{\vecv}\matD_i-\matD_i\vecv\herm{\vecv}\right),\; f\in\setB_b,\label{eq:Sprime_app_new}
\end{align}
where we suppress $(f;\vecp)$ and $(\vecp)$ for brevity.

We recall that, with the projection matrix $\matP=\matI_M-\frac{1}{M}\vecone\tp{\vecone}$, we have $\widetilde{\vecg}_i(\vecp)= \matP\vecg_i(\vecp)$~\eqref{eq:G_def_thm}, and by further defining $\bar g_i(\vecp)\triangleq\frac{1}{M}\vecone\tp\vecg_i(\vecp)$, we have $\widetilde{\vecg}_i(\vecp)=\vecg_i(\vecp)-\bar g_i(\vecp)\vecone$. Note also that $\herm{\vecv}\matD_i\vecv=\sum_{m=1}^M g_{m,i}=M\bar g_i$ since $|[\vecv]_m|=1$.

To prove the theorem, we will use the following key lemma.
\begin{lemma}\label{lem:trace_identity}
For each $b\in\{{\rm L},{\rm H}\}$ and $f\in\setB_b$,
\begin{equation}
\begin{aligned}\label{eq:trace_identity_app}
&\tr\!\Big(
\matS_{\widehat{\rvecx}}^{-1}(f;\vecp)\matS'_{\widehat{\rvecx},i}(f;\vecp)
\matS_{\widehat{\rvecx}}^{-1}(f;\vecp)\matS'_{\widehat{\rvecx},j}(f;\vecp)
\Big)
\\
&=
(2\pi f)^2\,w_b(R)[\matG(\vecp)]_{ij}.
\end{aligned}
\end{equation}
\end{lemma}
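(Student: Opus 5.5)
The plan is a direct computation on a fixed band, using the closed forms already obtained in the appendix: the Sherman--Morrison inverse \eqref{eq:Phi_inv_app_new}--\eqref{eq:Kb_app_def_new} and the derivative \eqref{eq:Sprime_app_new}. Fix $b$, $f\in\setB_b$ and $i,j$, and suppress arguments. Set $\vecu\triangleq\vecv(f;\vecp)$ and $\vech_i\triangleq\matD_i\vecu$. Since $\matD_i$ is real diagonal, $\herm{\vecu}\matD_i=\herm{\vech_i}$, so \eqref{eq:Sprime_app_new} reads $\matS'_{\widehat{\rvecx},i}=\jmath2\pi f\,S_bB_b^2\,\matA_i$ with $\matA_i\triangleq\vecu\herm{\vech_i}-\vech_i\herm{\vecu}$.

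Substituting $\matS^{-1}_{\widehat{\rvecx}}=\matK_b/S_{\rndw,b}$ and collecting scalars gives
\begin{equation*}
\tr\big(\matS^{-1}_{\widehat{\rvecx}}\matS'_{\widehat{\rvecx},i}\matS^{-1}_{\widehat{\rvecx}}\matS'_{\widehat{\rvecx},j}\big)=-(2\pi f)^2\gamma_b^2\,\tr\big(\matK_b\matA_i\matK_b\matA_j\big),
\end{equation*}
using $(S_bB_b^2/S_{\rndw,b})^2=\gamma_b^2$ and $\jmath^2=-1$. It therefore suffices to show
\begin{equation*}
\tr(\matK_b\matA_i\matK_b\matA_j)=-\frac{2M}{1+M\gamma_b}\,\tp{\widetilde{\vecg}}_i\widetilde{\vecg}_j .
\end{equation*}

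The key step is to center the delay gradients. Let $\widetilde{\vech}_i\triangleq\vech_i-\bar g_i\vecu=\mathrm{diag}(\widetilde{\vecg}_i)\vecu$. The $\bar g_i$ contributions cancel inside $\matA_i$, because $\vecu(\bar g_i\herm{\vecu})-(\bar g_i\vecu)\herm{\vecu}=\matzero$. Hence $\matA_i=\vecu\herm{\widetilde{\vech}_i}-\widetilde{\vech}_i\herm{\vecu}$.

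By the identity $\herm{\vecv}\matD_i\vecv=M\bar g_i$ noted above, $\herm{\vecu}\widetilde{\vech}_i=0$. So $\widetilde{\vech}_i$ lies in the orthogonal complement of $\vecu$. On that complement $\matK_b$ acts as the identity, while $\matK_b\vecu=\beta\vecu$ with $\beta\triangleq 1-\tfrac{M\gamma_b}{1+M\gamma_b}=\tfrac{1}{1+M\gamma_b}$. Consequently $\matK_b\matA_i=\beta\vecu\herm{\widetilde{\vech}_i}-\widetilde{\vech}_i\herm{\vecu}$.

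Now expand $\tr\big((\beta\vecu\herm{\widetilde{\vech}_i}-\widetilde{\vech}_i\herm{\vecu})(\beta\vecu\herm{\widetilde{\vech}_j}-\widetilde{\vech}_j\herm{\vecu})\big)$ into four rank-one traces:
\begin{itemize}
\item The two "diagonal" terms vanish, since each contains a factor $\herm{\widetilde{\vech}_i}\vecu=0$ or $\herm{\vecu}\widetilde{\vech}_j=0$.
\item The two cross terms give $-\beta\,\herm{\vecu}\vecu\,(\herm{\widetilde{\vech}_i}\widetilde{\vech}_j+\herm{\widetilde{\vech}_j}\widetilde{\vech}_i)$.
\end{itemize}
Because $|[\vecu]_m|=1$, we have $\herm{\vecu}\vecu=M$ and $\herm{\widetilde{\vech}_i}\widetilde{\vech}_j=\sum_m\widetilde g_{m,i}\widetilde g_{m,j}=[\matG]_{ij}$, which is real. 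The sum is therefore $-2\beta M[\matG]_{ij}$. Multiplying back by $-(2\pi f)^2\gamma_b^2$ yields $(2\pi f)^2\tfrac{2M\gamma_b^2}{1+M\gamma_b}[\matG]_{ij}=(2\pi f)^2w_b(R)[\matG]_{ij}$, as claimed.

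I expect the main obstacle to be the algebra of $\matK_b\matA_i\matK_b\matA_j$ if it is done with the uncentered $\vech_i$: it then produces many $\bar g_i$ cross terms. The centering substitution above, which exploits the orthogonality $\herm{\vecu}\widetilde{\vech}_i=0$, is the device I would rely on to avoid that.

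A secondary point is the degenerate band where $B_b=0$ (e.g., $\setB_{\rm H}$ at low rates). There $S_{\rndw,b}=0$ and the inverse is formally undefined. I would treat this case by convention: the observation carries no component on that band, so the Whittle integrand is taken as zero. This is consistent with $\gamma_b=0$ and hence $w_b=0$ (taking the limit $B_b\downarrow0$, where $\gamma_b\to0$). The identity then holds trivially on such a band.
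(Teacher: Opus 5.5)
Your proposal is correct and is essentially the paper's argument. It uses the same centering $\mathrm{diag}(\widetilde{\vecg}_i)\vecv\perp\vecv$, the same eigen-action of $\matK_b$ (eigenvalue $\frac{1}{1+M\gamma_b}$ along $\vecv$, identity on its orthogonal complement), and the same rank-one trace bookkeeping. The differences are minor. The paper first notes $\matK_b\matA_i\matK_b=\frac{1}{1+M\gamma_b}\matA_i$ and then computes $\tr(\matA_i\matA_j)$, whereas you expand $\tr\big((\matK_b\matA_i)(\matK_b\matA_j)\big)$ directly. Your limiting convention for a band with $B_b=0$, where $\gamma_b$ is formally $0/0$, spells out what the paper leaves implicit when it sets $w_{\rm H}(R)=0$.
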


\begin{IEEEproof}
Fix $b$ and $f\in\setB_b$, and suppress $(f;\vecp)$ for brevity. Using \eqref{eq:Phi_inv_app_new} and \eqref{eq:Sprime_app_new},
\begin{align}
\tr\!\Big(\matS^{-1}\matS'_i\matS^{-1}\matS'_j\Big)
&=
\frac{\big(\jmath 2\pi f\,S_b B_b^2\big)^2}{S_{\rndw,b}^2}
\tr\!\Big(\matK_b\matA_i\matK_b\matA_j\Big)\notag\\
&=
-(2\pi f)^2\gamma_b^2\,\tr\!\Big(\matK_b\matA_i\matK_b\matA_j\Big),
\label{eq:trace_start_new}
\end{align}
where $\matA_i\triangleq \matD_i\vecv\herm{\vecv}-\vecv\herm{\vecv}\matD_i$ and we used $\jmath^2=-1$.

Define $\widetilde{\matD}_i\triangleq \matD_i-\bar g_i\matI_M$ and $\vecu_i\triangleq \widetilde{\matD}_i\vecv$.
Then $\herm{\vecv}\vecu_i=\herm{\vecv}\widetilde{\matD}_i\vecv=M\bar g_i-M\bar g_i=0$, hence $\vecu_i\perp \vecv$, and
\begin{align}
    \matA_i
    =
    \vecu_i\herm{\vecv}-\vecv\herm{\vecu}_i.\label{eq:A_as_uv_new}
\end{align}
Moreover, since $\vecu_i=\mathrm{diag}(\widetilde{\vecg}_i)\vecv$ and $|[\vecv]_m|=1$,
\begin{align}
    \herm{\vecu}_i\vecu_j
    =
    \sum_{m=1}^M \widetilde g_{m,i}\widetilde g_{m,j}
    =
    \tp{\widetilde{\vecg}}_i\widetilde{\vecg}_j=[\matG(\vecp)]_{ij},\label{eq:uu_equals_g_new}
\end{align}
where we recall the definition \eqref{eq:G_def_thm} of the geometry matrix.

Next, note that $\matK_b\vecv=\frac{1}{1+M\gamma_b}\vecv$ and $\matK_b\vecu_i=\vecu_i$ (since $\vecu_i\perp \vecv$ and
$\matK_b$ is a rank-one modification along $\mathrm{span}\{\vecv\}$). Therefore, by \eqref{eq:A_as_uv_new} we have $\matK_b\matA_i\matK_b=\tfrac{1}{1+M\gamma_b}\matA_i$,
hence
\begin{align}
    \tr\!\Big(\matK_b\matA_i\matK_b\matA_j\Big)
    =
    \tfrac{1}{1+M\gamma_b}\tr(\matA_i\matA_j).\label{eq:reduce_to_AAj_new}
\end{align}
Finally, using \eqref{eq:A_as_uv_new} and $\herm{\vecv}\vecu_i=\herm{\vecu}_i\vecv=0$,
\begin{align}
    \matA_i\matA_j
    &=
    (\vecu_i\herm{\vecv}-\vecv\herm{\vecu}_i)(\vecu_j\herm{\vecv}-\vecv\herm{\vecu}_j)\\
    &=
    -\vecu_i(\herm{\vecv}\vecv)\herm{\vecu}_j
    -\vecv(\herm{\vecu}_i\vecu_j)\herm{\vecv},
\end{align}
so that
\begin{align}
    \tr(\matA_i\matA_j)
    &=
    -M\,\tr(\vecu_i\herm{\vecu}_j)-(\herm{\vecu}_i\vecu_j)\tr(\vecv\herm{\vecv})\notag\\
    &=
    -M(\herm{\vecu}_j\vecu_i)-M(\herm{\vecu}_i\vecu_j)
    =
    -2M\,\herm{\vecu}_i\vecu_j.\label{eq:trace_AAj_new}
\end{align}
Combining \eqref{eq:trace_start_new}, \eqref{eq:uu_equals_g_new}, \eqref{eq:reduce_to_AAj_new} and \eqref{eq:trace_AAj_new} yields
\begin{align}
    \tr\!\Big(\matS^{-1}\matS'_i\matS^{-1}\matS'_j\Big)
    &=
    (2\pi f)^2\gamma_b^2\frac{2M}{1+M\gamma_b}[\matG(\vecp)]_{ij},
\end{align}
which proves the lemma.
\end{IEEEproof}

Using Lemma~\ref{lem:trace_identity} in \eqref{eq:FIMrate_whittle_loc2} and summing the two bands gives
\begin{align}
    \left[\matJ^{\ratedist}_\infty(\vecp)\right]_{ij}
    &\!=\!
    2\pi^2\!
    \left(\sum_{b\in\{{\rm L},{\rm H}\}}\int_{\setB_b} f^2\,w_b(R)\,{\rm d}f\right)
    [\matG(\vecp)]_{ij}.\label{eq:Jrate_app_ij}
\end{align}
or, equivalently, in matrix form,
\begin{align}
    \matJ^{\ratedist}_\infty(\vecp)
    \!=\!
    2\pi^2
    \!\left(\!\sum_{b\in\{{\rm L},{\rm H}\}}\int_{\setB_b} f^2\,w_b(R)\,{\rm d}f\right)\matG(\vecp).\label{eq:Jrate_app_matrix}
\end{align}
Since the scalar term is constant over each band, we obtain
\begin{align}
    \matJ^{\ratedist}_\infty(\vecp)
    =
    \frac{4}{3}\pi^2
    \left( J_{\rm L} + J_{\rm H}\right)\matG(\vecp),\label{eq:Jrate_app_closed}
\end{align}
where lower and higher frequency information terms, $J_{\rm L}$ and $J_{\rm H}$, respectively, are defined in \eqref{eq:lowandhighfreqinfo}. Using the relation \eqref{eq:Jrate_def} and inverting the FIM $\matJ^{\ratedist}_T(\vecp)$ concludes the proof.

\bibliographystyle{IEEEbib}
\bibliography{Inputs/refs}

\newpage

\end{document}